\documentclass[letterpaper, 10 pt, conference]{ieeeconf}
\title{Aerial_Robots_CDC_2017_Arxiv_Version}                                      
\IEEEoverridecommandlockouts                                                                       
\usepackage[dvipsnames]{xcolor}
\usepackage{amsmath}
\usepackage{mathtools}
\usepackage{amsmath}
\usepackage{amssymb}
\usepackage{tikz}
\usepackage{nicefrac}
\usepackage{mathtools, cuted}
\usepackage{lipsum, color}
\usepackage{float}
\usepackage{afterpage}
\usepackage{graphicx}
\usepackage{caption}
\usepackage{subfig}
\usepackage{tabularx}
\usepackage{cite}
\usepackage{balance}

\DeclareMathOperator*{\Min}{min}
\newtheorem{theorem}{Theorem}
\newtheorem{lemma}{Lemma}
\newtheorem{assumption}{Assumption}

\overrideIEEEmargins

\title{\LARGE \bf
Joint optimization of transmission and propulsion in aerial communication networks
}

\author{Omar J. Faqir, Eric C. Kerrigan, and Deniz G\"und\"uz
\thanks{The support of the EPSRC Centre for Doctoral Training in High Performance Embedded and Distributed Systems  (HiPEDS, Grant Reference EP/L016796/1) is gratefully acknowledged.}
\thanks{O. J. Faqir and Deniz G\"und\"uz are with the Department of Electrical \& Electronics Engineering, Imperial College London, SW7~2AZ, U.K. {\tt\small ojf12@ic.ac.uk}, {\tt\small d.gunduz@ic.ac.uk}}%
\thanks{Eric C. Kerrigan is with the  Department of Electrical \& Electronic Engineering
and Department of Aeronautics, Imperial College London, London SW7~2AZ, U.K. {\tt\small e.kerrigan@imperial.ac.uk}}%
}

\begin{document}
\maketitle
\thispagestyle{empty}
\pagestyle{empty}

\begin{abstract}
	
Communication energy in a wireless network of mobile autonomous agents should be considered as the sum of transmission energy and propulsion energy used to facilitate the transfer of information.
Accordingly, communication-theoretic and Newtonian dynamic models are developed to model the communication and locomotion expenditures of each node. 
These are subsequently used to formulate a novel nonlinear optimal control problem (OCP) over a network of autonomous nodes. 
It is then shown  that, under certain conditions, the OCP can be transformed into an equivalent convex form.
Numerical results for a single link between a node and access point allow for comparison with known solutions before the framework is applied to a multiple-node UAV network, for which previous results are not readily extended.
Simulations show that transmission energy can be of the same order of magnitude as propulsion energy allowing for possible savings, whilst also exemplifying how speed adaptations together with power control may increase the network throughput.

\end{abstract}

\section{Introduction}

We aim to derive a control strategy to minimize communication energy in robotic networks. In particular, uninhabited aerial vehicle (UAV) networks are considered, with results being generalizable to broader classes of autonomous networks. A dynamic transmission model, based on physical layer communication-theoretic bounds, and a mobility model for each node is considered alongside a possible network topology. As a cost function, we employ the underused interpretation of communication energy as the sum of transmission energy and propulsion energy used for transmission, i.e.\ when a node changes position to achieve a better channel.

For simulation purposes we consider the two wireless network setups shown in Figure~\ref{fig:TwoNodeDiagram}. We first present the most basic scenario consisting of a single agent $U_1$ moving along a predefined linear path while offloading its data to a stationary access point~(AP). We compare results for variable and fixed speeds, before studying a two-agent single-hop network.

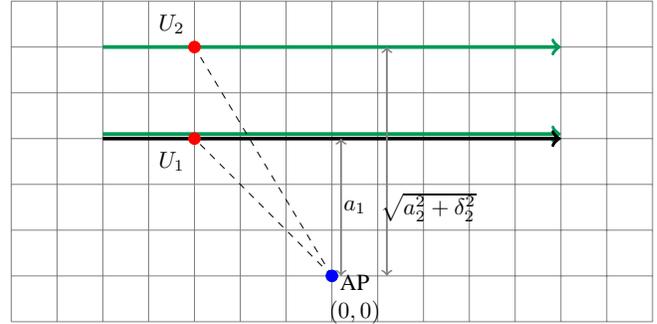
\begin{figure}[t!]
\centering{
\resizebox{\linewidth}{!}{
\begin{tikzpicture}[scale=0.7]
	\draw[step=1cm,gray,ultra thin] (-2,-1) grid (12,6);

	\draw[ForestGreen, ultra thick,->] (0,5) -- (10,5);
   	\draw[ForestGreen, ultra thick,->] (0,3.1) -- (10,3.1);
	\draw[black, ultra thick,->] (0,3) -- (10,3);

	\draw[black,dashed] (5,0) -- (2,3);
	\draw[black,dashed] (5,0) -- (2,5);
	\draw[gray, thick, <->] (5.2,0)--(5.2,3);
	\draw[gray, thick, <->] (6.2,0)--(6.2,5);

	\draw[red,fill=blue,blue] (5,0) circle (.8ex);
   	\draw[red,fill=red,red] (2,3) circle (.8ex);
	\draw[red,fill=red,red] (2,5) circle (.8ex);

    \node[draw=none, align=center] at (1.5,2.5) {$U_1$};
    \node[draw=none, align=center] at (1.5,5.5) {$U_2$};
    \node[draw=none, align=center] at (5.5,-0.5) {AP \\ $(0,0)$};
    \node[draw=none, align=center] at (5.5,1.5) {$a_1$};
    \node[draw=none, align=center] at (7.1,1.5) {$\sqrt{a_2^2+\delta_2^2}$};
\end{tikzpicture}}}
\caption{Geometric configuration for simulation setups featuring $N=1$ (black) and $N=2$ (green) nodes. Speeds along these paths may be variable or fixed.
The altitudes and lateral displacements of $U_1,U_2$ are $a_1=a_2=1000$\,m and $\delta_1=0,\delta_2=1000$\,m, respectively.}
\label{fig:TwoNodeDiagram}
\end{figure}

For UAV networks, research efforts largely break down into two streams: the use of UAVs in objective based missions (e.g.\ search and pursuit \cite{ko2002network}, information gathering/mobile sensor networks \cite{wang2013robotic,thammawichai2016optimizing}), and use as supplementary network links~\cite{zhan2011wireless}.
Optimal completion of these macro goals has been addressed in the literature, but there is no necessary equivalence between optimal task-based and energy-efficient operations. 

Efforts concerning mobility focus on mobile (in which node mobility models are random) or vehicular (where mobility is determined by higher level objectives and infrastructure) ad-hoc networks \cite{bekmezci2013flying}.
Since neither are fully autonomous networks, mobility is not available as a decision variable. The work in\cite{zhao2003message} introduced the concept of proactive networks, where certain nodes are available as mobile relays. 
However, the focus is on relay trajectory design and a simplistic transmission model is assumed, inherently prohibiting energy efficiency. The related problem of router formation is investigated in \cite{yan2012robotic} using realistic models of communication environments.

We assume hard path constraints, possibly due to the existence of higher level macro objectives, but allow changes in trajectory along the path by optimizing their speed (as in~\cite{Sujit2014unmanned}, we define a trajectory as being a time-parameterized path). 
Use of fixed paths does not restrict our results as most UAV path planning algorithms operate over longer time horizons and are generally restricted to linear or circular loiter trajectories \cite{Sujit2014unmanned}.
A linear program (LP) is used in \cite{yan2014go} to determine how close a rolling-robot should move before transmission in order to minimize total energy. However, the linear motion dynamics used restricts applicability of the model. Similarly to our current work, \cite{zeng2016throughput} uses a single mobile UAV relay to maximize data throughput between a stationary source-destination pair. An optimal trajectory for an a priori transmission scheme is iteratively found. Similarly, for a given trajectory, the optimal relaying scheme may be obtained through water-filling over the source-to-relay and relay-to-receiver channels.

Our contribution differs from the above works in terms of the formulation of a more general nonlinear convex OCP for finding joint transmission and mobility strategies to minimize communication energy. We solve this problem, exemplifying possible savings for even just a single node. As a final point, we show analytically and numerically that, even at fixed speeds, the optimal transmission scheme for a two-user multiple-access channel(MAC) is counter-intuitive and not captured by na\"{\i}ve transmission policies. 

\section{Problem Description}
Consider $N$ homogeneous mobile nodes $U_n, n \in \mathcal{N} \triangleq \{1,\ldots,N\}$, traveling along linear non-intersecting trajectories at constant altitudes $a_n$  and lateral displacements $\delta_n$ over a time interval $\mathcal{T}\triangleq[0,T]$. The trajectory of node $U_n$ is denoted by $t \mapsto (q_n(t),\delta_n,a_n)$, relative to a single stationary AP $U_0$ at position $(0,0,0)$ in a three dimensional space. 
At $t=0$, $U_n$ is initialized with a data load of $D_n$ bits, which must all be offloaded to $U_0$ by time $t=T$. We consider a cooperative network model, in which all  nodes cooperate to offload all the data in the network to the AP by relaying each other's data. Each node has a data buffer of capacity $M$ bits, which limits the amount of data it can store and relay. 

\subsection{Communication Model}
We employ scalar additive white Gaussian noise (AWGN) channels. For UAV applications, we assume all links are dominated by line-of-sight (LoS) components, resulting in flat fading channels, meaning
all signal components undergo similar amplitude gains~\cite{tse2005fundamentals}. All nodes have perfect information regarding link status, which in practice may be achieved through feedback of channel measurements, while the overhead due to channel state feedback is ignored. 

Similar to \cite{ren2004improved}, for a given link from source node $U_{n}$ to receiver node $U_m$, the channel gain $\eta_{nm}(\cdot)$ is expressed as
\begin{equation}
\eta_{nm}(q_{nm}) \triangleq \frac{G}{\left(\sqrt[]{a_{nm}^2 + \delta_{nm}^2+ q_{nm}^2}\right)^{2\alpha}},
\end{equation}
where $q_{nm} \triangleq q_n - q_m$, constant $G$ represents transmit and receive antenna gains and $\alpha \geq 1$ the path loss exponent. We define $a_{nm}$ and $\delta_{nm}$ in a similar fashion. The channel gain is inversely related to the Euclidean distance between nodes.

Each node has a single omnidirectional antenna of maximum transmit power of $P_{\text{max}}$ Watts. We consider half duplex radios; each node transmits and receives over orthogonal frequency bands. Accordingly, a different frequency band is assigned for each node's reception, and all messages destined for this node are transmitted over this band, forming a MAC. We do not allow any coding (e.g.\ network coding) or combining of different data packets at the nodes, and instead consider a decode-and-forward-based routing protocol at the relay nodes \cite{gunduz2007opportunistic}. The resulting network is a composition of Gaussian MACs, for each of which the set of achievable rate tuples defines a polymatroid capacity region
\cite{tse1998multiaccess}. If~$N$ nodes simultaneously transmit independent information to the same receiver, the received signal is a superposition of the transmitted signals scaled by their respective channel gains, plus an AWGN term. We model the achievable data rates using Shannon capacity, which is a commonly used upper bound on the practically achievable data rates subject to average power constraints. Due to the convexity of the capacity region, throughput maximization does not require time-sharing between nodes \cite{tse1998multiaccess}, but may be achieved through successive interference cancellation (SIC).

Consider a single MAC consisting of $N$ users $U_n, n \in \mathcal{N}$, transmitting to a receiver $U_m, m \not\in \mathcal{N}$. The capacity region $\mathcal{C}_N(\cdot,\cdot)$, which denotes the set of all achievable rate tuples~$r$, is defined as
\begin{equation}
\mathcal{C}_N(q, p) \triangleq \left\{r \geq 0 \mid  f_m(q,p,r,\mathcal{S}) \leq 0, \forall \mathcal{S} \subseteq \mathcal{N} \right\},
\end{equation}
where $q$ is the tuple of the differences $q_{nm}$ in positions between the $N$ users and the receiver, $p \in \mathcal{P}^N$ is the tuple of transmission powers allocated by the $N$ users on this channel, and $\mathcal{P} \triangleq [0,P_{\text{max}}]$ is the range of possible transmission powers for each user.
$f_m(\cdot)$ is a nonlinear function bounding $\mathcal{C}_N(q, p)$, given by 
\begin{multline}
f_m(q,p,r,\mathcal{S}) \triangleq \ \sum_{n \in\mathcal{S}}r_n -\\B_m\log_2\left(1 + \sum_{n \in S}\frac{\eta_{nm}(q_{nm})p_n}{ \sigma^2} \right),
\end{multline}

where $r_n$ is the $n^{\text{th}}$ component of $r$, $B_m$ is the bandwidth allocated to $U_m$, and $\sigma^2$ is the receiver noise power.
Consider the example (Section~\ref{sec:MultNode})  where we do not allow relaying. This gives rise to a MAC with  $N=2$ transmitters $U_1,U_2$ and the AP $U_0$. The capacity region $\mathcal{C}_2(q,p)$ is the set of non-negative tuples $(r_1,r_2)$ that satisfy
\begin{subequations}
\label{eq:CapRegion2}
\begin{align}  
r_1  &\leq \textstyle B_0\log_{2}\left(1+ \dfrac{\eta_{10}(q_{10})p_1}{\sigma^2}\right) \\
r_2  &\leq \textstyle B_0\log_{2}\left(1+ \dfrac{\eta_{20}(q_{20})p_2}{\sigma^2}\right) \\
r_1 + r_2 &\leq \textstyle B_0\log_{2}\left(1+ \dfrac{\eta_{10}(q_{10})p_1+\eta_{20}(q_{20})p_2}{\sigma^2}\right) 
\end{align}
\end{subequations}
for all $(p_1,p_2) \in \mathcal{P}^2$. The first two bounds restrict individual user rates to the single-user Shannon capacity. Dependence between  $U_1$ and $U_2$ leads to the final constraint, that the sum rate may not exceed the point-to-point capacity with full cooperation. For transmit powers $(p_1,  p_2)$  these constraints trace out the pentagon shown in Figure~\ref{fig:CapacityRegion}. The sum rate is maximized at any point on the segment~$L_3$. Referring to SIC, the rate pair at boundary point $R^{(1)}$ is achieved if the signal from source $U_2$ is decoded entirely before source~$U_1$, resulting in the signal from $U_2$ being decoded at a higher interference rate than the signal from $U_1$. At $R^{(2)}$ the opposite occurs.

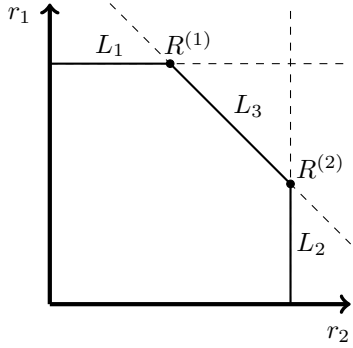
\begin{figure}[t!]
\centering
\vspace{0.5em}
\begin{tikzpicture}[scale=0.8]
	\draw[black, ultra thick, ->] (0,0) -- (0,5);
	\draw[black, ultra thick, ->] (0,0) -- (5,0);

	\draw[black,dashed] (0,4) -- (5,4);
	\draw[black,dashed] (4,0) -- (4,5);
	\draw[black,dashed] (1,5) -- (5,1);

	\draw[black, thick] (0,4) -- (2,4);
	\draw[black, thick] (4,0) -- (4,2);
	\draw[black, thick] (2,4) -- (4,2);

    \draw[black,fill=black,black] (4,2) circle (.4ex);
   	\draw[black,fill=black,black] (2,4) circle (.4ex);


	\node[draw=none, align=center] at (-0.5,4.8) {$r_1$};
    \node[draw=none, align=center] at (4.8,-0.5) {$r_2$};
    \node[draw=none, align=center] at (2.3,4.35) {$R^{(1)}$};
    \node[draw=none, align=center] at (4.5,2.3) {$R^{(2)}$};
    \node[draw=none, align=center] at (1,4.3) {$L_1$};
    \node[draw=none, align=center] at (4.35,1) {$L_2$};
    \node[draw=none, align=center] at (3.3,3.3) {$L_3$};

\end{tikzpicture}
    \caption[Two user capacity region]{Capacity region for a given power policy across two parallel channels, with corner rate pairs labeled as $R^{(1)}=(r_1^{(1)},r_2^{(1)})$ and $R^{(2)}=(r_1^{(2)},r_2^{(2)})$ and line segments labeled as $L_1, L_2, L_3$.}
    \label{fig:CapacityRegion}
\end{figure}

\subsection{Propulsion Energy Model}

The electrical energy used for propulsion in rolling robots has been modeled as a linear or polynomial function of speed in \cite{yan2014go,mei2004energy} respectively.  We take a more general approach, restricting the fixed wing UAV to moving at strictly positive speeds and using Newtonian laws as a basis, as in \cite{ali2016motion}. The function $\Omega(\cdot)$ models the resistive forces acting on node $U_n$ in accordance with the following assumption.

\begin{assumption} \label{ass:umption2}
The resistive forces acting on each node~$U_n$ may be modeled by the function $x \mapsto \Omega(x)$ such that $x \mapsto x\Omega(x)$ is convex on $x \in [0,\infty)$ and $\infty$ on $x \in (-\infty,0)$.
\end{assumption}

Comparatively, in the fixed wing model proposed in \cite{zeng2016energy}, the drag force of a UAV traveling at constant altitude at sub-sonic speed $v$ is 
\begin{equation}
\Omega(v) = \frac{\rho C_{D0}Sv^2}{2} + \frac{2L^2}{(\pi e_0 A_R)\rho S v^2}
\end{equation}
where the first term represents parasitic drag and the second term lift-induced drag. Parasitic drag is proportional to $v^2$, where $\rho$ is air density, $C_{D0}$ is the base drag coefficient, and~$S$ is the wing area. Lift induced drag is proportional to $v^{-2}$, where $e_0$ is the Oswald efficiency, $A_R$  the wing aspect ratio and $L$ the induced lift \cite{zeng2016energy}. For fixed-altitude flight, $L$ must be equal to the weight of the craft $W=mg$. The power required to combat drag is  the product of speed and force. 

The propulsion force $F_n(\cdot)$ must satisfy the force balance equation 
\begin{equation}
F_n(t) - \Omega(v_n(t)) = m_n\dot{v}_n(t),
\end{equation}
where $m_n$ is the node mass, $v_n(t)$ is the speed and~$\dot{v}_n(t)$ is the acceleration. The instantaneous power used for propulsion is the product $v_n(t)F_n(t)$, with the total propulsion energy taken as the integral of this power over $\mathcal{T}$. We assume $v_n(t) \geq 0, \ \forall t \in \mathcal{T}$, which is valid for fixed wing aircrafts. Thrust is restricted to the range $[F_{\text{min}},F_{\text{max}}]$.

\subsection{General Continuous-Time Problem Formulation}
We formulate the problem in continuous-time. 
At time~$t$, node $U_n, n \in \mathcal{N}$ can transmit to any node $U_m, m \in \{\mathcal{N},N+1\}\setminus\{n\}$ at a non-negative data rate $r_{nm}(t)$ using transmission power $p_{nm}(t)$. The sum power used in all outgoing transmissions from $U_n$ is denoted by $p_n(t)$. From this, the set of achievable data rates is bounded above by a set of $2^{|\mathcal{N}|}-1$ nonlinear submodular functions $f_m(\cdot,\cdot,\cdot,\cdot)$, where $|\cdot|$ applied to a set denotes the cardinality operator. 
Exponential growth in the number of nodes is a computational intractability.
Hence, results are limited to small or structured networks where only a subset of nodes use each MAC.

The trajectory of node $U_n$ is denoted by the tuple 
\begin{equation}
Y_n \triangleq (p_n,r_n,s_n,q_n,v_n,\dot{v}_n,F_n),
\end{equation}
where $q_n(t)$ is the node's position at time $t$ and $s_n(t)$ the state of its storage buffer subject to maximum memory of $M$  bits. The optimal control problem that we want to solve is
\begin{subequations} \label{eq:GeneralProbForm}
\begin{align}
& \Min_{p,r,s,q,v,F}
\sum_{n=1}^{N} \int_{0}^{T} p_n(t) + v_n(t) F_n(t) \mathrm{d}t \label{eq:CostFunc} \\
   \text{s.t. } & \forall n \in \mathcal{N},  m \in \{\mathcal{N},N+1\},  t \in \mathcal{T},\mathcal{S}\subseteq\mathcal{N} \notag 
\end{align}  
\begin{align}
& f_m(q(t),p(t),r(t),\mathcal{S}\setminus\{m\}) \leq 0 \label{eq:CapRegionConst} \\
& \dot{s}_n(t) =   \sum_{m \neq n}^{N}  r_{mn}(t) - \sum_{m\neq n}^{N+1} r_{nm}(t) \label{eq:storageUpdateConst}  \\
& s_n(0) = D_{n} , \quad s_n(T) = 0 \label{eq:initAndFinalStorageConst} \\
& q_n(0) = Q_{n,\text{init}}, \quad q_n(T) = Q_{n,\text{final}} \label{eq:initAndFinalPosConst} \\
& v_n(0) = v_{n,\text{init}} \label{eq:initVelocity} \\
&F_n(t)  = m_n\dot{v}_n(t) + \Omega(v_n(t)) \label{eq:AccelerationBound} \\
&\dot{q}_n(t) = \zeta_n v_n(t) \label{eq:defVel} \\
& Y_{n,\text{min}} \leq Y_n(t) \leq Y_{n,\text{max}}    \label{eq:varBounds} 
\end{align}
\end{subequations}
The cost function (\ref{eq:CostFunc}) is the sum of nodal transmission and propulsion energies. Constraint (\ref{eq:CapRegionConst}) bounds the achievable data rate to within the receiving nodes' capacity region, and~\eqref{eq:storageUpdateConst} updates the storage buffers with sent/received data. Constraints (\ref{eq:initAndFinalStorageConst}) act as initial and final constraints on the buffers, while (\ref{eq:initAndFinalPosConst})--(\ref{eq:defVel}) ensure all nodes travel from their initial to final destinations without violating a Newtonian force-acceleration constraint; $\zeta_n\in\{-1,1\}$ depending on whether the position $q_n(t)$ decreases or increases, respectively, if the speed $v_n(t)\geq 0$. The final constraint (\ref{eq:varBounds}) places simple bounds on the decision variables, given by
\begin{subequations}\begin{align}
Y_{n,\text{min}} &\triangleq (0,0,0,
-\infty,
V_{\text{min}},-\infty,F_{\text{min}}), \\
Y_{n,\text{max}} &\triangleq (P_{\text{max}},\infty,M,
\infty,
V_{\text{max}},\infty,F_{\text{max}}),
\end{align}
\end{subequations}
where $0 \leq V_{\text{min}} \leq V_{\text{max}}$ 
and $F_{\text{min}} \leq F_{\text{max}}$. The above optimal control problem may then be fully discretized using optimal control solvers, such as ICLOCS \cite{falugi2010imperial}. Before simulation results are presented we prove that this problem admits an equivalent convex form under certain conditions.

\section{Convexity Analysis}
Efficient convex programming methods exist, which may be used in real-time applications. We first show that the nonlinear data rate constraints (\ref{eq:CapRegionConst}) are convex in both positions and transmission power. We then show that the non-linear equality constraint (\ref{eq:AccelerationBound}) may be substituted into the cost function, convexifying the cost function. This, however, turns the previously simple thrust bound $F_\text{min} \leq F_n(t)$ into a concave constraint, resulting in a convex OCP if thrust bounds are relaxed. 
The absence of thrust bounds arises when considering a fixed trajectory, or is a reasonable assumption if the speed range is sufficiently small.

\begin{lemma} \label{lemma:1}
The rate constraints (\ref{eq:CapRegionConst}) are convex in powers and positions for all path loss exponents $\alpha \geq 1$.
\end{lemma}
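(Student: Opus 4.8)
\emph{Proof approach.} Since \eqref{eq:CapRegionConst} has the form $f_m(q(t),p(t),r(t),\mathcal{S}\setminus\{m\})\le 0$, the plan is to show that for each receiver $m$ and each $\mathcal{S}\subseteq\mathcal{N}$ the map $(q,p,r)\mapsto f_m(q,p,r,\mathcal{S})$ is jointly convex; the feasible set of \eqref{eq:CapRegionConst} is then an intersection of convex sets, one per pair $(m,\mathcal{S})$, and so is convex. In $f_m$ the term $\sum_{n\in\mathcal{S}}r_n$ is affine, and each $q_{nm}=q_n-q_m$ is an affine function of the decision positions, so convexity of $f_m$ is equivalent to concavity of $(q,p)\mapsto\log_2\!\big(1+\sum_{n\in\mathcal{S}}\eta_{nm}(q_{nm})p_n/\sigma^2\big)$.

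For that I would use the composition rule that $(t_1,\dots,t_k)\mapsto\log_2(1+\sum_i t_i)$ is concave and nondecreasing in each $t_i$: if every inner term $(p_n,q_{nm})\mapsto\eta_{nm}(q_{nm})p_n/\sigma^2$ is concave, then so is the composition. Dropping the positive constants $G,\sigma^2$ and writing $c\triangleq a_{nm}^2+\delta_{nm}^2>0$, the lemma then reduces to the scalar claim that $u(p,q)\triangleq p\,(c+q^2)^{-\alpha}$ is concave on $\{p\ge 0\}$ for every $\alpha\ge1$.

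That scalar claim is the step I expect to do the work, and the one in which the hypothesis $\alpha\ge1$ is used. The plan is to form the $2\times2$ Hessian of $u$ and check that it is negative semidefinite; along the way one uses that $q\mapsto(c+q^2)^{\alpha}$ is convex, its second derivative $2\alpha(c+q^2)^{\alpha-2}\big((2\alpha-1)q^2+c\big)$ being nonnegative for $\alpha\ge1$. The main obstacle is that joint concavity of a fractional term $p/\phi(q)$ does not follow from convexity of $\phi$ alone, so the Hessian determinant must be examined directly using the explicit form $\phi(q)=(c+q^2)^{\alpha}$; this is essentially the only place where the argument can fail. Should negative semidefiniteness not hold for all $q\in\mathbb{R}$, the fallback is to analyze the Hessian of the full logarithmic term with the differences $q_{nm}$ restricted to the operating window set by the fixed horizon $T$ and the speed bounds of \eqref{eq:varBounds} — the same small-speed regime in which the thrust bounds are relaxed later. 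Once the logarithmic term is concave, convexity of $f_m$, and hence of \eqref{eq:CapRegionConst}, follows.
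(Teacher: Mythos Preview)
Your plan mirrors the paper's proof: both reduce \eqref{eq:CapRegionConst} to joint concavity of the single inner term $u(p,q)=p(c+q^{2})^{-\alpha}$ (the paper writes it as $h(x,y)=xy^{-2\alpha}$ with $y$ the source--receiver distance) and then invoke a Hessian test, after which the composition with the convex, non-increasing map $-\log_2(1+\cdot)$ yields convexity of $f_m$.

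The gap is exactly where you suspect it, and it is fatal rather than reparable by shrinking the domain. Since $u_{pp}=0$ and $u_{pq}=-2\alpha q(c+q^{2})^{-\alpha-1}$, one has
\[
\det\nabla^{2}u(p,q)\;=\;0\cdot u_{qq}-u_{pq}^{2}\;=\;-4\alpha^{2}q^{2}(c+q^{2})^{-2\alpha-2}\;<\;0\qquad\text{whenever }q\neq 0,
\]
so $\nabla^{2}u$ is indefinite at every such point and $u$ is not jointly concave on any open set off the line $q=0$. Your fallback of analysing the full logarithmic term on a restricted window does not rescue the argument either: with $\alpha=c=1$ the values of $\log\!\bigl(1+u(p,q)\bigr)$ at $(p,q)=(1.1,0.9)$ and $(0.9,1.1)$ average to about $0.4082$, which exceeds the midpoint value $\log 1.5\approx 0.4055$, so concavity already fails on an arbitrarily small neighbourhood of $(1,1)$. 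The paper's own proof suffers from the same defect---its claim that $\nabla^{2}h$ is negative semidefinite because the principal minors are ``non-positive'' applies the wrong sign pattern for negative semidefiniteness, and in fact $h_{yy}>0$---so neither route establishes the lemma as stated; the feasible set of \eqref{eq:CapRegionConst} is genuinely non-convex in $(p,q)$.
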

\begin{proof}
By writing the channel gains as an explicit function of node positions, for receiver $U_m$ each of the capacity region constraints is of the form
\begin{multline}
\sum_{n \in S}r_{n}(t) - \\ B_m\log_2 \left(
1 + \frac{G}{\sigma^2}\sum_{n \in S} \frac{p_{n}(t)}{(a_{nm}^2+\delta_{nm}^2 + q_{nm}(t)^2)^{\alpha}} \right) \leq 0.
\end{multline}
Since the non-negative weighted sum of functions preserves convexity properties, without loss of generality we take $S$ to be a singleton, and drop subscripts. We also drop time dependencies. The above function is the composition of two functions $\phi_1 \circ \phi_2(\cdot)$, respectively defined as 
\begin{align}
\phi_1(r,\phi_2(\cdot)) &\triangleq r - B \log_2(1+\phi_2(\cdot)), \\
\phi_2(p,q) &\triangleq \frac{G}{\sigma^2}\frac{p}{(a^2+\delta^2 + q^2)^\alpha}.
\end{align}
The function $(p,q) \mapsto \phi_2(p,q)$ is concave on the domain $\mathbb{R}_+ \times \mathbb{R}$. We show this by dropping constants and considering the simpler function $h(x,y)\triangleq xy^{-2\alpha}$ with Hessian
\begin{equation} 
	\nabla^2 h(x,y) = \begin{bmatrix}
	0 & \frac{-2\alpha}{y^{-2\alpha-1}} \\
   \frac{-2\alpha}{y^{-2\alpha-1}} & \frac{2\alpha(2\alpha-1)x}{y^{-2\alpha-2}}  \\
	\end{bmatrix},
\end{equation}
which is negative semi-definite, because it is symmetric with non-positive sub-determinants. Therefore, $\phi_2$ is jointly concave in both power and the difference in positions over the specified domain. $\phi_1$ is convex and non-increasing as a function of $\phi_2$. Since the composition of a convex, non-increasing function with a concave function is convex~\cite{boyd2004convex}, all data rate constraint functions are convex functions of~$(r,p,q)$. 
\end{proof}

The posynomial objective function is not convex over the whole of its domain and the logarithmic data rate term prevents the use of geometric programming (GP) methods.
\begin{lemma} \label{lemma:2}
The following problem
\begin{subequations} 
\begin{align}
& \min_{v_n, F_n} \int_0^T F_n(t) v_n(t) \mathrm{d}t  \\
 \text{s.t. } & \forall t \in \mathcal{T}  \notag \\
& F_n(t) - \Omega(v_n(t)) = m_n\dot{v}_n(t) \\
& F_{\text{min}} \leq f_m(t) \leq F_{\text{max}} \label{eq:vConst1} \\
& v_n(t) \geq 0 \label{eq:vConst2} \\
& v_n(0) = v_{n,\text{init}} \label{eq:vConst3}
\end{align}
\end{subequations}
of minimizing propulsion energy of a single node $U_n$, subject to initial and final conditions, admits an equivalent convex form for mappings $v_n(t) \mapsto \Omega(v_n(t))$ satisfying Assumption~\ref{ass:umption2} and force bounds $(F_\text{min},F_\text{max}) = (-\infty,\infty)$.
\end{lemma}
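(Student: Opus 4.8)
The approach is to eliminate the thrust $F_n$ via the force-balance equality and fold the resulting term into the objective, thereby exposing a convex program in the single trajectory $v_n(\cdot)$.

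First I would use the force-balance constraint to write $F_n(t) = m_n \dot v_n(t) + \Omega(v_n(t))$ — the only value it permits — and substitute this everywhere. In the cost this gives
\[
\int_0^T F_n v_n \,\mathrm{d}t = m_n\!\int_0^T \dot v_n(t)\, v_n(t)\,\mathrm{d}t \;+\; \int_0^T v_n(t)\,\Omega\bigl(v_n(t)\bigr)\,\mathrm{d}t .
\]
The first integral equals $\tfrac{m_n}{2}\int_0^T \tfrac{\mathrm{d}}{\mathrm{d}t} v_n(t)^2 \,\mathrm{d}t = \tfrac{m_n}{2}\bigl(v_n(T)^2 - v_{n,\text{init}}^2\bigr)$, the net change in kinetic energy; since $v_n(0)$ is fixed by \eqref{eq:vConst3}, this is a constant if a terminal speed is prescribed among the final conditions and otherwise a convex function of the free endpoint value $v_n(T)$, which is itself a linear pointwise evaluation of $v_n(\cdot)$ — so in either case it is a convex functional. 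The second integral is convex because, by Assumption~\ref{ass:umption2}, $x\mapsto x\,\Omega(x)$ is convex on $[0,\infty)$, so on the feasible region $v_n(t)\ge 0$ (constraint \eqref{eq:vConst2}) the integrand is convex in $v_n(t)$ pointwise and integration preserves convexity. Hence the transformed objective is convex in $v_n(\cdot)$.

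Next I would verify the constraints of the transformed problem. Once the equality is used to define $F_n$, the only remaining constraint involving thrust is the box $F_{\text{min}} \le m_n\dot v_n + \Omega(v_n) \le F_{\text{max}}$; under the hypothesis $(F_{\text{min}},F_{\text{max}})=(-\infty,\infty)$ this is vacuous and is dropped. What remains — $v_n(t)\ge 0$, $v_n(0)=v_{n,\text{init}}$, and the simple bounds $V_{\text{min}}\le v_n(t)\le V_{\text{max}}$ — are affine in $v_n(\cdot)$, so the transformed problem minimizes a convex functional over an affine feasible set and is convex. The two problems are equivalent because $v_n \mapsto \bigl(v_n,\; m_n\dot v_n + \Omega(v_n)\bigr)$ is a cost-preserving bijection between feasible $v_n$ and feasible pairs $(v_n,F_n)$: given an optimizer $v_n^\star$ of the convex problem, $F_n^\star := m_n\dot v_n^\star + \Omega(v_n^\star)$ recovers an optimizer of the original, and conversely.

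I expect the only real subtlety, rather than an obstacle, to be the thrust box: keeping it would make $F_{\text{min}} \le m_n\dot v_n + \Omega(v_n)$ a sublevel-set constraint on $-\Omega$, convex only if $\Omega$ is concave, while $m_n\dot v_n + \Omega(v_n)\le F_{\text{max}}$ needs $\Omega$ convex — incompatible for nontrivial $\Omega$ — which is exactly why the lemma relaxes the bounds (and why the surrounding text calls the reinstated bound concave). The care needed is to confirm that Assumption~\ref{ass:umption2} yields convexity of the drag-work term precisely on the region $v_n \ge 0$ where the problem lives (the assumption's value $+\infty$ for negative arguments encodes the same restriction and can be carried through as an extended-valued convex integrand), and to treat the kinetic-energy boundary term according to whether the terminal speed is fixed.
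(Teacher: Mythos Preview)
Your proposal is correct and follows essentially the same route as the paper: substitute the force-balance equality to eliminate $F_n$, split the cost into the kinetic-energy boundary term $\tfrac{m_n}{2}(v_n(T)^2 - v_n(0)^2)$ and the drag-work integral $\int v_n\Omega(v_n)\,\mathrm{d}t$, invoke Assumption~\ref{ass:umption2} for the latter, and observe that the thrust box becomes vacuous under $(F_{\text{min}},F_{\text{max}})=(-\infty,\infty)$. Your treatment is in fact slightly more careful than the paper's in spelling out the bijection establishing equivalence and in explaining why \emph{both} thrust bounds (not just the lower one) are problematic for general $\Omega$.
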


\begin{proof}
By noting that $F_n(t) = \Omega(v_n(t)) + m_n\dot{v}_n(t)$, we move the equality into the cost function, rewriting the problem as 
\begin{equation}
\min_{v_n} \phi(v_n)  \text{ s.t.\ (\ref{eq:vConst1})--
(\ref{eq:vConst3})}, 
\end{equation}
where
\begin{equation}
\phi(v_n) \triangleq \underbrace{\int_0^T v_n(t)\Omega(v_n(t)) \mathrm{d}t}_{\phi_1(v_n)} + \underbrace{\int_0^T v_n(t)\dot{v}_n(t)\mathrm{d}t}_{\phi_2(v_n)}.
\end{equation}

We now show that both $\phi_1(\cdot)$ and $\phi_2(\cdot)$ are convex. Starting with the latter, by performing a change of variable, the analytic cost is derived by first noting that $\phi_2(v_n)$ is the change in kinetic energy
\begin{equation}
\phi_2(v_n) = m_n\int_{v_n(0)}^{v_n(T)} v dv = \frac{m_n}{2}\left(v_n^2(T)-v_n^2(0)\right),
\end{equation}
which is a convex function of $v_n(T)$ subject to fixed initial conditions \eqref{eq:vConst2}; in fact, it is possible to drop the $v_n^2(0)$ term completely without affecting the argmin.
By Assumption~\ref{ass:umption2}, $v_n(t) \mapsto v_n(t)\Omega(v_n(t))$ is convex and continuous on the admissible domain of speeds. Since integrals preserve convexity, the total cost function $\phi(\cdot)$ is also convex. 

Removal of thrust $F$ as a decision variable results in the set
\begin{equation}
\mathcal{V}_F \triangleq \{v_n \mid F_\text{min} \leq \Omega(v_n(t)) + m_n \dot{v}_n(t) \leq F_\text{max} \}.
\end{equation}
Even if $\Omega(\cdot)$ is convex on the admissible range of speeds, the lower bound represents a concave constraint not admissible within a convex optimization framework.
Therefore, dropping  constraints on thrust results in a final convex formulation of
\begin{subequations} 
\begin{align}
& \min_{v_n} \int_0^T v_n(t)\Omega(v_n(t)) \mathrm{d}t +  \frac{m_n}{2}\left(v_n^2(T)-v_n^2(0)\right) \\
 &\text{s.t. }  \forall t \in \mathcal{T} \notag\\
&V_\text{min} \leq v_n \leq V_\text{max} \\ 
& v_n(0) = v_{n,\text{init}}.
\end{align}
\end{subequations}
\end{proof}
Addition of  bounds $v_n \in \mathcal{V}_F$ naturally results in a difference of convex (DC) problem \cite{yuille2003concave} that may be solved through exhaustive or heuristic procedures.

\begin{theorem}
In the absence of  constraints on thrust, the general problem (\ref{eq:GeneralProbForm}) admits an equivalent convex form.
\end{theorem}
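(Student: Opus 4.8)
The plan is to assemble the statement from the two lemmas: Lemma~\ref{lemma:2} supplies the convexification of the propulsion part of the cost, Lemma~\ref{lemma:1} supplies convexity of the coupling constraint~\eqref{eq:CapRegionConst}, and everything else in~\eqref{eq:GeneralProbForm} is already affine or a simple box. The first step is to eliminate the nonlinear equality~\eqref{eq:AccelerationBound}. Exactly as in the proof of Lemma~\ref{lemma:2}, substitute $F_n(t)=\Omega(v_n(t))+m_n\dot v_n(t)$ into the integrand $v_n(t)F_n(t)$ of~\eqref{eq:CostFunc}. Dropping the thrust constraints (the hypothesis of the theorem) means~\eqref{eq:AccelerationBound} no longer appears anywhere else, so this elimination is lossless: it removes $F_n$ and one equality without altering the feasible set in the remaining variables. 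The affine aggregation relating per-link powers $p_{nm}$ to the nodal power $p_n$ is retained and causes no difficulty.

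Second, I would show the transformed cost is convex. It splits as $\sum_n \int_0^T p_n(t)\,\mathrm{d}t$, which is linear, plus $\sum_n\int_0^T v_n(t)\Omega(v_n(t))\,\mathrm{d}t$, which is convex by Assumption~\ref{ass:umption2} (for each $t$, $v\mapsto v\Omega(v)$ is convex, composition with the linear evaluation $v_n(\cdot)\mapsto v_n(t)$ preserves this, and integration is a nonnegative combination; the bound $v_n\ge 0$ keeps us on the admissible domain), plus $\sum_n m_n\int_0^T v_n(t)\dot v_n(t)\,\mathrm{d}t$. This last term is \emph{not} convex as written --- it is a bilinear, indefinite functional of $(v_n,\dot v_n)$ --- so the key manoeuvre, again borrowed from Lemma~\ref{lemma:2}, is the change of variable $v_n\dot v_n\,\mathrm{d}t = v_n\,\mathrm{d}v_n$, giving $\frac{m_n}{2}\bigl(v_n^2(T)-v_n^2(0)\bigr)$. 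Since $v_n(0)$ is pinned by~\eqref{eq:initVelocity}, this is a convex function of the linear endpoint functional $v_n(\cdot)\mapsto v_n(T)$, and the constant $v_n^2(0)$ may be discarded. Hence the transformed problem has a convex objective, and because the substitution and the identity hold along every feasible trajectory, it has the same minimizers as~\eqref{eq:GeneralProbForm}; that is, it is an \emph{equivalent} convex form rather than a relaxation.

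Third, check the constraints. Constraint~\eqref{eq:CapRegionConst} is convex in $(r,p,q)$ by Lemma~\ref{lemma:1}, and since $q_{nm}=q_n-q_m$ is a linear map of the position variables, convexity is preserved under composition. The buffer dynamics~\eqref{eq:storageUpdateConst}, the initial/terminal data and position conditions~\eqref{eq:initAndFinalStorageConst}--\eqref{eq:initVelocity}, the kinematic relation~\eqref{eq:defVel}, and the implicit relation making $\dot v_n$ the time derivative of $v_n$ are all affine (linear finite-difference equalities after discretization). The box~\eqref{eq:varBounds}, now restricted to $(p,r,s,q,v)$, defines a convex set; the only nonconvex piece it previously contained --- the lower thrust bound, which after elimination of $F_n$ reads $\Omega(v_n(t))+m_n\dot v_n(t)\ge F_{\text{min}}$, a superlevel set of a convex function --- has been removed together with the thrust constraints. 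Minimizing the convex functional of the second step over this convex feasible set is a convex OCP, which is the claim.

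The main obstacle is the treatment of the $\int v_n\dot v_n$ term: one must argue carefully that replacing it by the kinetic-energy difference yields a genuinely equivalent problem (same argmin), which rests on the substitution being an exact algebraic identity along every feasible trajectory and on $v_n(0)$ being fixed; a bookkeeping check that eliminating $F_n$ is lossless is needed too, and holds precisely because the constraints on $F_n$ are the ones being dropped. (If those bounds are retained, one obtains, as noted after Lemma~\ref{lemma:2}, a DC program rather than a convex one.) The remaining steps are routine applications of the calculus of convex functions and of the two preceding lemmas.
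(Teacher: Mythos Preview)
Your proposal is correct and follows essentially the same route as the paper: eliminate $F_n$ via~\eqref{eq:AccelerationBound} using the substitution from Lemma~\ref{lemma:2} (including the kinetic-energy identity for the bilinear term), invoke Lemma~\ref{lemma:1} for~\eqref{eq:CapRegionConst}, and observe that all remaining constraints are affine or simple boxes once the thrust bounds are dropped. Your treatment is in fact more explicit than the paper's on several points (losslessness of the elimination, the linear map $q_{nm}=q_n-q_m$, and why the lower thrust bound is the offending piece), but the architecture of the argument is identical.
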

\begin{proof}
Non-convexities in this formulation arise from the posynomial function of speed $v(t)$ and thrust~$F_m(t)$ in the cost function (\ref{eq:CostFunc}), the nonlinear force balance equality~\eqref{eq:AccelerationBound}, and the capacity region data rate constraints (\ref{eq:CapRegionConst}). The cost function is a superposition of the energies used by each node for propulsion and transmission. By noting that there is no coupling between nodes or between propulsion and transmission powers in this cost, the transformation used in Lemma~\ref{lemma:2} may be used to eliminate the nonlinear equality. We eliminate  $F_n(t)$ and $\dot{v}_n(t)$ and move the nonlinear equality into the objective function, simultaneously convexifying the objective to get
\begin{subequations}
\begin{align*}
   &  \Min_{p,r,s,q,v} \sum_{n=1}^N \left[ \int_0^T p_n(t) + v_n(t)\Omega(v_n(t))\mathrm{d}t + \frac{m_n}{2} v_n^2(T) \right] \\
   \text{s.t.} & \quad \forall n \in \mathcal{N},  m \in \{\mathcal{N},N+1\},  t \in \mathcal{T}, v \in \mathcal{V}^N,	\mathcal{S}\subseteq\mathcal{N} \notag \\
	& \text{(\ref{eq:CapRegionConst})--(\ref{eq:initVelocity}),\ \eqref{eq:defVel}},\   \tilde{Y}_{n,\text{min}} \leq \tilde{Y}_n(t) \leq \tilde{Y}_{n,\text{max}}
\end{align*}
\end{subequations}
where $\tilde{Y}_{n}(t) \triangleq \left(p_n(t),r_n(t),s_n(t),q_n(t),v_n(t)\right)$, and the bounds $\tilde{Y}_{n,\text{min}},$ and $ \tilde{Y}_{n,\text{max}}$ are similarly changed. It follows from Lemma~\ref{lemma:1} that all data rate constraints in (\ref{eq:CapRegionConst}) are also convex, therefore the whole problem is convex.
\end{proof}

\begin{table}[b!]
\begin{tabular}{ | c | c | c | c | c | c |}
 \hline
 $\sigma^2$ [W] & $B$ [Hz] & $M$ [GB]& $P_{\max}$ [W] & $\alpha$ & $T$ [min] \\
 \hline
 $10^{-10}$ & $10^5$ & $1$ & $100$ & 1.5 & 20 \\
 \hline
\end{tabular}
\caption{\label{tab:SimParameters} Dynamic model parameters that have been used across all simulation results.}
\end{table}

\section{Simulation Results} 	
The open source primal dual Interior Point solver \texttt{Ipopt v.3.12.4} has been used through the \texttt{MATLAB}
interface. 
 Table~\ref{tab:SimParameters} contains parameters common to the following experiments. Force constraints are relaxed in all experiments. 
From \cite{bekmezci2013flying}, the speed of a typical UAV is in the range $30$ to $460$ km/h. 
All nodes are initialized to their average speeds $v_{n,\text{init}}= (V_\text{max}+V_\text{min})/2$, We assume all nodes move in symmetric trajectories around the AP such that $Q_{n,\text{final}} = -Q_{n,\text{init}} = (T/2)v_{n,\text{init}}$.
 
\subsection{Single Node}
A single mobile node $U_1$ of mass~$3$\,kg traveling at fixed altitude $a=1000$\,m and lateral displacement $\delta=0$\,m, depicted in Figure~\ref{fig:TwoNodeDiagram}, is considered first. In this section, simulation results are presented for the problem of minimizing the total communication energy to offload all data to $U_0$. This is compared to a water-filling solution~\cite{wolf2011introduction} for minimizing the transmission energy. Subscripts denoting different nodes have been dropped in the remainder of this section. 
Specifically, we use $\Omega(\cdot)$ of the form
\begin{equation}
\Omega(x) \triangleq \left\{
\begin{array}{ll} 
\infty, \ &\forall x \in  (-\infty,0)\\
C_{D1}x^{2} + C_{D2}x^{-2}, \ &\forall x \in [0,\infty),
\end{array}
\right.
\end{equation}
where $C_{D1} = 9.26\times10^{-4}$ is the parasitic drag coefficient and $C_{D2}=2250$ is the lift induced drag coefficient \cite{zeng2016energy}.

Simulation results are shown in Figure~\ref{fig:SingleNodeCommsProblemSimulationResults} for a storage buffer initialized to $D=75$\,MB and speeds restricted in the range $[V_\text{min},V_\text{max}]=[30,100]$\,km/h. This results in a total energy expenditure of $309.50$\,kJ, where $105.05$\,kJ is due to transmission and $204.51$\,kJ is due to propulsion. Of this, only $48.01$\,kJ of \emph{extra} propulsion energy is used to vary speed on top of the base energy required to traverse the distance at a constant speed. 
Furthermore, the problem would have been infeasible if the node was restricted to a constant speed of $65$\,km/h. We note that, with the given parameterization, it is possible to transmit up to $78$\,MB of data in the defined time interval. 

In comparison, if the speed of $U_1$ is fixed, then the maximum transmittable data is approximately $56$\,MB, using~$120.00$\,kJ of transmission energy. 
Although considerably more energy is used, the optimal power policy for a fixed trajectory is characterized by a water-filling solution, an equivalent proof of which may be found in \cite{wolf2011introduction}. This problem results in a one dimensional search space, easily solved through such algorithms as binary search.

\begin{figure}[t!]
\centering
\vspace{0.5em}
\subfloat[Optimal transmission power and propulsion force used by $U_1$.]{
	\label{subfig:singelNodeSpeedVarying1}
	\includegraphics[width=\columnwidth]{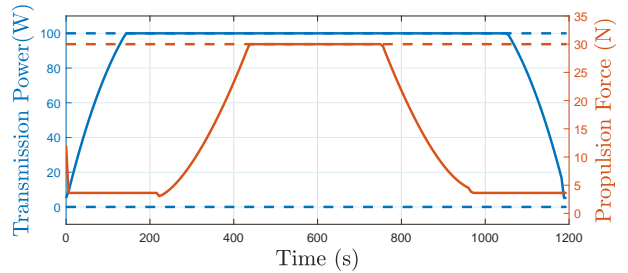}} 

\subfloat[Associated achieved data rate and velocity profile of $U_1$.]{
	\label{subfig:singelNodeSpeedVarying2}
	\includegraphics[width=\columnwidth]{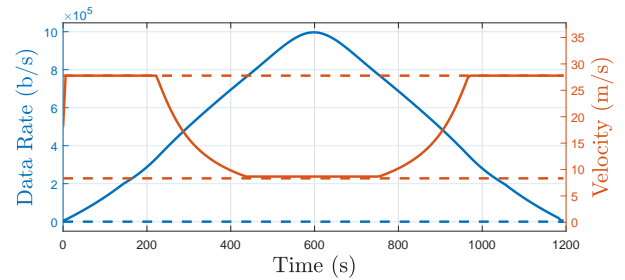}} 
   
\caption{Simulation results for the single-node problem, with trajectories shown as solid, and bounds shown as dashed lines.}
\label{fig:SingleNodeCommsProblemSimulationResults}
\end{figure}
\subsection{Multiple Nodes} \label{sec:MultNode}

We now investigate the transmission energy problem for two nodes, traveling in parallel trajectories at fixed speeds such that $V_\text{max}=V_\text{min}=65$\,km/h, as depicted by the green lines in Figure~\ref{fig:TwoNodeDiagram}. Relaying is not allowed, as may be the case if no bandwidth is allocated to $U_1$ and $U_2$ to receive each other's transmissions, equivalently turning them into pure source nodes. Simulation results are presented in  Figure~\ref{fig:TwoNodeSimulation}.

$U_1$ is closer to the AP at all times, and therefore is advantaged in that it experiences more favorable channel conditions. The disadvantaged node $U_2$ transmits for a longer duration due to the smaller relative change in its channel gain. The interior point algorithm converged after~42 iterations to a minimum energy of $52.707$kJ and $26.77$kJ for $U_1$ and $U_2$, respectively, for a starting data load of $D_1=D_2=25$\,MB. 

\begin{figure}[t!]
\centering
\vspace{0.5em}
\subfloat[Transmit powers of nodes $U_1$ and $U_2$.]{
	\label{subfig:error1}
	\includegraphics[width=\columnwidth]{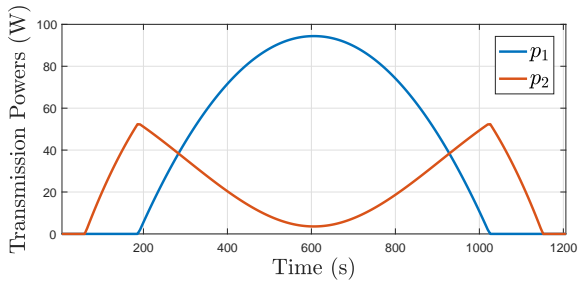}} 

\subfloat[Associated transmission rates achieved by nodes $U_1$ and $U_2$.]{
	\label{subfig:error2}
	\includegraphics[width=\columnwidth]{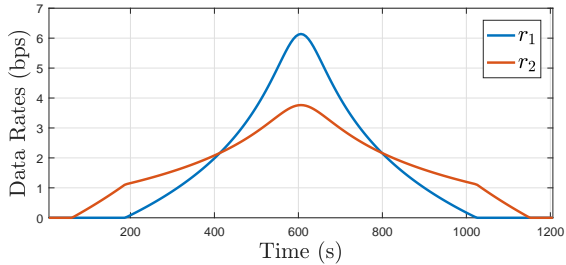} } 

\caption{Simulation results for the two-node transmission power problem.}
\label{fig:TwoNodeSimulation}
\end{figure}

It is notable that the advantaged node uses considerably more transmission energy than the disadvantaged node. Referring to \cite{cheng1993gaussian}, which derives two-user optimal power allocations that achieve arbitrary rate tuples on the boundary of $\mathcal{C}$ we explain this as follows. From Figure \ref{fig:CapacityRegion}, the optimal rate pairs for given transmit powers $p_1$ and $p_2$ lie on the segment $L_3$.  Equivalently, $ \exists \varrho \in [0,1]$ such that the rate pair for an arbitrary point $R^{(*)}=(r_1^{(*)},r_2^{(*)})$ on $L_3$ is given by the interpolation $R^{(*)} = \varrho \cdot R^{(1)} + (1-\varrho) \cdot R^{(2)}$.
We may interpret $\varrho$ as being the priority assigned to each transmitting node by the $U_0$ when SIC is being carried out. $\varrho=1$ means that data from $U_1$ is being decoded second, subject to a lower noise rate, while $\varrho=0$ means the opposite decoding order. We may think of the mapping $t \mapsto \varrho(t)$ as  a time-varying priority. 
However, by calculating $\varrho(t)$ from the optimum powers and rates seen in Figure~\ref{fig:TwoNodeSimulation}, we find that $\varrho(t)=0, \forall t \in \mathcal{T}$ such that $p_1(t)>0,p_2(t)>0$. In other words, the disadvantaged node is always given priority, which is why it uses less energy at the optimum, even though it always experiences a worse channel gain.

\section{Conclusions}
We have presented a general optimization framework for joint control of propulsion and transmission energy for single/multi-hop communication links in robotic networks.
The relaxation of transmission constraints to theoretic capacity bounds, with relatively mild assumptions on the mobility model, results in a nonlinear but convex OCP.
We showed that optimizing over a fixed path, as opposed to a fixed trajectory, increases the feasible starting data by at least~$30\%$ for just a single node.
For the fixed-trajectory two-node MAC simulation, the optimal solution  has been presented and analyzed. 
Immediate extensions of this work include higher fidelity models, and analysis of the relay network encompassed in problem (\ref{eq:GeneralProbForm}). Considering the overarching goal of real-time control, further developments will be closed-loop analysis of the control strategy, and consideration of the computational burden and energy expenditure  \cite{thammawichai2016optimizing, nazemi2016qoi} in the network.

\balance
\bibliography{CDC2017} 
\bibliographystyle{ieeetr}
\end{document}